\newtheorem{thm}{Theorem}
\newtheorem{lem}[thm]{Lemma}
\newtheorem{fac}[thm]{Fact}
\renewcommand{\leq}{\leqslant}
\renewcommand{\geq}{\geqslant}
\newcommand{\OPT}{\operatorname{OPT}}
\newcommand{\LP}{\operatorname{LP}}
\newcommand{\bigUn}{{1\!\!1}}
\newcommand{\LinProg}[3]{
	\ensuremath{
	\left\{
	\begin{array}{rl}
	\operatorname{#1} 
	& #2\\[5mm]
	\text{such that:} #3
	\end{array}
	\right.
	}
}
\newcommand{\MinProg}[2]{\LinProg{Minimize}{#1}{#2}}
\newcommand{\OMITTED}[1]{}
\title{Facility Location in Evolving Metrics%
	\thanks{This work was partially supported by the ANR-2010-BLAN-0204 Magnum and ANR-12-BS02-005 RDAM grants.}}
\author{David Eisenstat\thanks{Brown University (USA)}%
\and Claire Mathieu\thanks{CNRS, \'Ecole normale sup\'erieure UMR 8548 (France) - \href{http://www.di.ens.fr/ClaireMathieu.html}{\texttt{http://www.di.ens.fr/ClaireMathieu.html}}}%
\and Nicolas Schabanel\thanks{CNRS, Universit\'e Paris Diderot (France) - \href{http://www.liafa.univ-paris-diderot.fr/~nschaban/}{\texttt{http://www.liafa.univ-paris-diderot.fr/$\sim$nschaban/}} -
	IXXI, \'Ecole normale sup\'erieure de Lyon (France)}
}
\begin{document}
\maketitle

\begin{abstract}
Understanding the dynamics of evolving social or infrastructure networks is a challenge in applied areas such as epidemiology, viral marketing, or urban planning. During the past decade, data has been collected on such networks but has yet to be fully analyzed. We propose to use information on the dynamics of the data to find stable partitions of the network into groups. For that purpose, we introduce a time-dependent, dynamic version of the facility location problem, that includes a switching cost when a client's assignment changes from one facility to another. This might provide a better representation of an evolving network, emphasizing the abrupt change of relationships between subjects rather than the continuous evolution of the underlying network. We show that  in realistic examples this model yields indeed better fitting solutions than optimizing every snapshot independently.  We present an $O(\log nT)$-approximation algorithm and a matching hardness result, where $n$ is the number of clients and $T$ the number of time steps. We also give an other algorithms with approximation ratio $O(\log nT)$  for the  variant  where one pays at each time step (leasing) for each open facility.
\end{abstract}


\section{Introduction}

During the past decade, a massive amount of data has been collected on diverse networks such as web links, nation- or world-wide social networks, online social networks (Facebook or Twitter for example), social encounters in hospitals, schools, companies, or conferences (e.g.~\cite{Newman2003,StehleVoirinBarrat+2011}), and other real-life networks. Those networks evolve with time, and their dynamics have a considerable impact on their structure and effectiveness (e.g.~\cite{PastorSatorrasVespignani2001,Kleinberg2004SIAMNews}). Understanding the dynamics of evolving networks is a central question in many applied areas such as epidemiology, vaccination planning, anti-virus design, management of human resources, viral marketing, ``facebooking'', etc.  Obtaining a relevant clustering of the data is often a key to the design informative representations of massive data sets. Algorithmic approaches have for instance been successful in yielding useful insights on several real networks such as zebras social interaction networks~\cite{TantipathananandhBergerWolfKempe2007}.

But the dynamics of real-life evolving networks are not yet well understood, partly because it is difficult to observe and analyze such large networks sparsely connected over time. Some basic facts have been observed (such as the preferential attachment  or copy-paste mechanisms) but more specific structures remain to be discovered.  In this article, we propose a new formulation of the facility location problem adapted to these evolving networks. We show that requiring the solution to be stable over time yields in many realistic situations better fitting solutions than optimizing independently various snapshots of the network. 

\paragraph{The problem.}  We focus on the facility location problem where clients are moving in some space: we look for the best connections of clients to facilities (sometimes called centers) over time minimizing a tradeoff between three objectives. The two first objectives are classically: the \emph{distance cost} of the connections (the sum of the connection lengths), so that each client gets connected to a facility representative of its position; and the \emph{opening cost}, a price paid for opening each open facility, so that only the most meaningful facilities (and as few of them as possible) get open. The third and new objective is the \emph{instability} of the connections over time, measured as the number of clients switching from one facility to another over time, so that only the changes responding to significative and lasting changes in the metrics get authorized. We argue that incorporating this stability requirement in the objective function helps in many realistic situations to obtain more desirable solutions (see Section~\ref{sec:def}).

\paragraph{Related work.} Facility location problem has been studied extensively in the offline, online and incremental settings, see \cite{Fotakis2011} for a survey. The offline version of the problem was a case study accompanying the development of techniques for approximation algorithms: primal-dual and dual fitting methods and local search for example. A series of papers, \cite{ShmoysTardosAardal1997,MahdianYeZhang2002,JainMahdianMarkakisSaberiVazirani2003,AryaGargKhandekarMeyersonMunagalaPandit2004,CharikarGuha2005,ByrkaAardal2010,Li2011}, obtained  (almost) matching upper and lower bounds on the polynomially achievable approximation ratio in this setting: $\Theta(\log n)$ in the non-metric case, and within $[1.463,1.488]$ in the metric case, when the client-to-facility connection cost is a distance in a metric space. 

The online setting, where clients arrive over time and the algorithm gradually buys more and more facilities to serve them, was first addressed by \cite{Meyerson2001} who obtained $\Theta(\log n/\log\log n)$ upper and lower bounds on the competitive ratio of any online algorithm. This  later led to developments for various cases, e.g. analyzed when clients are drawn from some distribution~\cite{AnagnostopoulosBentUpfalVanHentenryck2004} and in other cases~\cite{Fotakis2008}. In order to allow more flexibility in the solution (as required in many clustering application),  incremental approaches, which allow reconsidering the assignment of clients to facilities over time, were also considered. Such variants may allow better ($O(1)$) competitive ratios, see e.g. in the metric case but with streaming constraints~\cite{Fotakis2006}, and  in the special case in the Euclidian setting when facilities may be moved as new clients arrive~\cite{DivekiImreh2011}. We also mention the related clustering problem in which clusters may be merged but not split, e.g.~\cite{CharikarChekuriFederMotwani1997}.

Our approach differs from the existing algorithmic approaches to dynamic settings because we focus on settings where distances may vary over time, and where it is desirable to achieve a tradeoff between the \emph{stability} of the solution --- clusters of clients tend not to flip-flop constantly ---  and its \emph{adaptability} --- the assignment ought to be modified if distance change too much. We show that offline static algorithms that construct an independent optimal solution for each snapshot of the network yield results that, in a large variety of realistic situations, are not only unstable (and thus arbitrarily bad for our objective), but also undesirable with respect to network dynamics analysis. Online solutions such as~the clustering of Charikar et al.~\cite{CharikarChekuriFederMotwani1997} are also unnecessarily pessimistic in this setting: we have access to the whole evolution of the network over time (as given by experiments such as~\cite{StehleVoirinBarrat+2011}) and we can thus anticipate future changes. 


As far as we know, the case where the distance between points vary overtime is still largely unexplored.

\paragraph{Our results.} 

After defining the problem formally in section~\ref{sec:def} and giving examples showing the benefits one can expect from solving this problem in the context of metrics evolving with time, we give in Section~\ref{sec:fixed:alg} a $O(\log nT)$-approximation algorithm for this problem, where $n$ is the number of clients and $T$ the number of time steps.

\begin{thm}[Fixed opening cost] \label{thm:fixed:alg}
There is a polynomial time randomized algorithm which outputs a solution to the dynamic facility location problem with fixed opening cost whose cost verifies: 
$$\Pr\bigl\{cost \leq 4\log(2nT) \cdot \OPT\bigr\} \geq \Pr\bigl\{cost \leq 4\log(2nT) \cdot \LP\bigr\} \geq 1/4.$$
\end{thm}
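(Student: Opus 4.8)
The plan is to round an LP relaxation, in the spirit of the classical $O(\log n)$-approximation for \emph{non-metric} facility location (randomized LP rounding as for set cover), the new ingredient being that the rounding must respect time so that the switching cost stays under control. Introduce a variable $y_i$ for the extent to which facility $i$ is opened (once, available at all times), a variable $x_{ij}^t$ for the extent to which client $j$ is served by $i$ at time $t$, and auxiliary variables $z_{ij}^t\geq 0$ with $z_{ij}^t\geq x_{ij}^t-x_{ij}^{t-1}$, measuring how much $j$ ``enters'' $i$ at time $t$. Impose $\sum_i x_{ij}^t\geq 1$ for every $(j,t)$, $x_{ij}^t\leq y_i$, and nonnegativity; minimize $\sum_i f_i y_i+\sum_{i,j,t} d_{ij}^t x_{ij}^t+g\sum_{i,j,t} z_{ij}^t$. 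Every integral solution gives a feasible point of the same cost, so $\LP\leq\OPT$; this is exactly the first inequality in the statement, and it remains to prove $\Pr\{cost\leq 4\log(2nT)\cdot\LP\}\geq 1/4$. Solve the LP in polynomial time to get $(x,y,z)$.

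\emph{Opening facilities.} Open each facility $i$ independently with probability $\min(1,\,\Theta(\log nT)\cdot y_i)$. The expected opening cost is then at most $\Theta(\log nT)\cdot\sum_i f_i y_i\leq\Theta(\log nT)\cdot\LP$. Moreover, by the standard filtering trick --- for each $(j,t)$, the facilities within $2D_j^t$, where $D_j^t:=\sum_i d_{ij}^t x_{ij}^t$ is the fractional connection cost, carry at least half of the mass $\sum_i x_{ij}^t$, hence at least half of the corresponding $y$-mass --- the probability that none of these ``close'' facilities is opened is $\exp(-\Theta(\log nT))$, so a union bound over the at most $nT$ pairs $(j,t)$ produces an event $\mathcal G$ of probability at least $1/2$ on which every client has, at every time, an opened facility within $2D_j^t$.

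\emph{Routing, and the main obstacle.} On $\mathcal G$, route each client $j$ along the cheapest trajectory --- a shortest path in the time-expanded facility graph, computed offline --- that uses at every time $t$ an opened facility within $2D_j^t$; such a trajectory exists by $\mathcal G$ and its connection cost is within a factor $2$ of $\sum_t D_j^t$. The hard part is bounding its expected switching cost by $O(\log nT)\cdot g\sum_{i,t} z_{ij}^t$. I would do this by viewing, for fixed $j$, the choice of the trajectory's ``facility windows'' as a covering of the time line $\{1,\dots,T\}$ by the sets $\{t:i\text{ is opened and within }2D_j^t\text{ of }j\}$: the LP solution, suitably restricted to close facilities, yields a fractional such cover of weight $O\bigl(1+\sum_{i,t} z_{ij}^t\bigr)$, and passing to the cheapest integral cover (which the shortest path realizes) loses only the $O(\log nT)$ set-cover integrality-gap factor on a ground set of size $\leq nT$. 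The delicate --- and, I expect, most laborious --- points are to check that restricting to close facilities does not inflate the fractional switching cost and that the fractional cover can be taken to use only opened facilities; note that independent per-time-step rounding, as for static facility location, would let a client flip-flop at every step, so correlating the rounding across time (here already ensured by the fact that an opened facility stays available) is essential.

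\emph{Conclusion.} Combining the three estimates and tuning the constants hidden in the opening probability and the filtering radius, the expected cost of the output restricted to $\mathcal G$ is at most $\log(2nT)\cdot\LP$. Markov's inequality then gives $\Pr\{\mathcal G\text{ and }cost>4\log(2nT)\cdot\LP\}\leq 1/4$, hence $\Pr\{cost\leq 4\log(2nT)\cdot\LP\}\geq\Pr\{\mathcal G\}-1/4\geq 1/4$; and $\Pr\{cost\leq 4\log(2nT)\cdot\OPT\}\geq\Pr\{cost\leq 4\log(2nT)\cdot\LP\}$ because $\LP\leq\OPT$. The crux of the whole argument is the switching-cost bound in the routing step, the only place where the dynamic nature of the problem really bites.
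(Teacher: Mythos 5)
Your high-level setup (LP relaxation, $\Theta(\log nT)$-scaled randomized opening, a union bound over the $\leq nT$ client--time pairs, Markov at the end) matches the paper's, but the step you yourself flag as ``the crux'' --- controlling the switching cost --- is where the argument has a genuine gap, and the paper handles it in a way your sketch does not recover. Two concrete problems. First, your filtering is per time step: the event $\mathcal G$ only guarantees that at each $t$ \emph{some} open facility lies within $2D_j^t$, but these can be different facilities at consecutive steps, and a facility used for a whole block of time must be paid for at \emph{every} $t$ in that block, including times when it is far from $j$; so your distance bound and your switching bound fight each other, and the sketch does not reconcile them. Second, the set-cover reduction you propose for the switching cost would give at best an integral cover with $O\bigl(\log(nT)\cdot(1+\sum_{i,t}z_{ij}^t)\bigr)$ sets; the additive ``$+1$'' per client, multiplied by $O(\log nT)$ and by $g$, is not charged to any LP term (when the LP never switches client $j$ one has $\sum_{i,t}z_{ij}^t=0$, yet your bound still permits $\Theta(\log nT)$ switches for $j$), and moreover your ``sets'' $\{t:\ i\text{ open and within }2D_j^t\}$ need not be intervals, so covering the timeline with few of them does not bound the number of switches of a trajectory.

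The paper sidesteps both issues with a different decomposition. For each client it greedily partitions time into maximal intervals $I=[t^j_k,t^j_{k+1})$ on which the \emph{persistent} fractional mass $\sum_i\min_{t\in I}x_{ij}^t$ stays at least $1/2$, and proves (Fact~\ref{fac:sumz>1/2}) that every interval except the last forces at least half a unit of fractional switching in the LP --- a deterministic factor-$2$ bound on the switching cost, with no additive term and no integrality-gap loss. Then, within each interval, it couples the $y$-proportional sampling with sampling proportional to $\min_{t\in I}x_{ij}^t$ (the event $B^I_j$), so that with probability $\geq 1/2$ every pair $(j,I)$ gets a facility that is simultaneously usable for the whole interval and whose expected distance at every $t\in I$ is at most $2\sum_i x_{ij}^t\,d_t(i,j)$, killing both the distance and the persistence problems at once. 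You would need to supply something equivalent to this interval decomposition and coupled sampling for your proof to go through; as written, the routing step does not yield the claimed bound.
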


Repeating this algorithm $O(\log \frac1\epsilon)$ times and outputing the best solution increases the success probability to $1-\epsilon$ for arbitrarily small $\epsilon>0$. We then show in Section~\ref{sec:fixed:hard} that this approximation ratio is asymptotically optimal, even if one assume that the distance verifies the triangle inequality at every time step and if the input consists in only one client and two possible positions for the client and the facilities. 

\begin{thm}[Hardness for fixed opening cost] \label{thm:fixed:hard}
Unless $P\neq N\!P$, there is no $o(\log T)$-approximation, even for the metric case with one client and two possible positions.
\end{thm}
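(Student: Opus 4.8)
The plan is to reduce from a hard covering-type problem whose optimum gaps grow logarithmically, most naturally Set Cover (equivalently Dominating Set), since the $\Omega(\log n)$ inapproximability of Set Cover under $P \neq NP$ is exactly the kind of bound we want to transfer onto the time axis. The key idea is that with a single client and two positions for everything, a solution over $T$ time steps is essentially a sequence of binary choices (which of the two locations hosts the client/its serving facility at each step), and the switching cost lets us penalize ``changing your mind,'' so that cheap solutions correspond to picking a small number of time-intervals on which the client stays put — i.e., a small sub-collection of intervals that ``covers'' the demands. So I would encode the ground set of a Set Cover instance as time steps (or as events happening at time steps), and encode each set in the collection as a pattern of distances over time that is ``cheap'' (zero connection cost) precisely on the time steps corresponding to the elements of that set, and ``expensive'' elsewhere.

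Concretely, I would build the following gadget. There is one client and two sites $a, b$. At each time step the client can be served from a facility at $a$ or at $b$; opening costs are set (say to something negligible or to a fixed small value, or handled so that both facilities may be assumed open throughout, since there are only two of them) so that the only real cost is distance plus switching. For each set $S_j$ in the Set Cover instance I reserve a block of time steps; within set $S_j$'s block, I make position $a$ the cheap one exactly in a way that ``spells out'' which elements $S_j$ contains, using the two sites to alternately be near/far from the client's fixed position. The triangle inequality at each time step is trivially satisfiable because at any fixed time there are only two points, so any symmetric nonnegative assignment of a single distance value between $a$ and $b$ (with zero self-distances) is a metric. Switching cost $1$ (or a suitable scaling) then makes it so that a solution that is cheap on connection cost must ``commit'' to few blocks, and being cheap inside a committed block forces the corresponding set to actually cover the elements assigned there; conversely a set cover of size $k$ yields a solution of cost $\Theta(k)$, and any solution of cost $\le k$ yields a set cover of size $O(k)$. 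Choosing $T = \operatorname{poly}(\text{Set Cover size})$ and tracking constants, a hypothetical $o(\log T)$-approximation would give an $o(\log n)$-approximation for Set Cover, contradicting $P \neq NP$.

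The main obstacle — and the step I would spend the most care on — is making the ``two positions only'' restriction actually express a Set Cover structure: with only two locations, the instantaneous information content per time step is a single bit, so I must be careful that the sequence of distances, read together with the switching penalty, genuinely forces the combinatorial choice I want, and that the accounting is tight enough to preserve a $\Theta(\log)$ gap rather than collapsing it to $O(1)$. I would handle this by a staircase/indicator construction: arrange the $T$ steps into $|{\cal U}|$ ``element phases,'' and in element phase $i$ alternate which of $a,b$ is at distance $0$ from the client so that staying on one fixed site across element phase $i$ costs (switching) proportional to the number of sets covering $i$ that you are not using — forcing you, for each element, to switch onto a site aligned with some chosen covering set. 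Balancing switching cost against the number of distinct sites-patterns used reproduces the set-cover objective up to constants. An alternative, if the direct Set Cover embedding proves awkward under the two-position constraint, is to reduce instead from the known $\Omega(\log n/\log\log n)$ online facility location lower bound of Meyerson or to use an information-theoretic / Yao-style argument on the binary time-signal, but I expect the Set Cover reduction to be cleanest and to give the stated $\Omega(\log T)$ bound directly.
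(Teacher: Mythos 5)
Your choice of source problem (Set Cover) and the identification of ground-set elements with time steps are exactly right, and this is indeed the paper's reduction at the highest level. But your concrete gadget has a fatal flaw: you read ``two possible positions'' as ``two facilities,'' and you then try to make the \emph{switching} cost carry the set-cover objective. With a single client and only two facilities (one pinned at $a$, one at $b$), a solution is just a binary string of length $T$ (which site serves the client at each step), the open set $A$ has at most four possibilities, and the total cost is a sum of per-step terms plus $g$ times the number of bit flips. That optimization is solved exactly in polynomial time by dynamic programming over time with a constant-size state, so no such instance can be hard to approximate at all --- the construction collapses before any accounting of constants. The ``single bit per time step'' worry you raise is precisely the symptom of this: with two facilities there is not enough solution space to encode a choice among $m$ sets.

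The paper's reduction keeps the two \emph{locations} $a,b$ but uses $m$ \emph{facilities}, one per set $S_i$; facility $i$ sits at $a$ at time $t$ iff element $t\in S_i$, and at $b$ (at distance $\infty$ from the client, who is always at $a$) otherwise. The switching cost is set to $g=0$ and plays no role; instead the \emph{opening} cost does all the work: a finite-cost solution must open a family of facilities such that at every time step some open facility is at $a$, i.e.\ the chosen sets cover every element, and the cost is then exactly $f$ times the number of opened facilities. The approximation gap of Set Cover on $T$ elements thus transfers verbatim as an $\Omega(\log T)$ gap, with no constant-tracking needed. To repair your proposal you would need to (i) reintroduce one facility per set, letting the subset $A\subseteq F$ of opened facilities encode the chosen subfamily, and (ii) move the objective from the switching term to the opening term, since with distances in $\{0,\infty\}$ the distance term serves only as a feasibility constraint.
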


This new problem differs then significantly from the classic facility location problem which  admits no $o(\log n)$-approximation for non-metric distances but a $1.488$-approximation when the distance satisfy the triangle inequality \cite{Li2011}. We then show  in Section~\ref{sec:hourly} how to extend our approximation algorithm to the setting where facilities can be open and closed at any time step and where one pays $f$ for each facility open at each time step.

\begin{thm}[Hourly opening cost] \label{thm:hourly:alg}
There is a polynomial time randomized algorithm which outputs a solution to the dynamic facility location problem with hourly opening cost whose cost verifies: 
$$\Pr\bigl\{cost \leq 4\log(2nT) \cdot \OPT\bigr\} \geq \Pr\bigl\{cost \leq 4\log(2nT) \cdot \LP\bigr\} \geq 1/4.$$
\end{thm}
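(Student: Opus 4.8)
The plan is to reuse the linear-programming relaxation and the randomized rounding behind Theorem~\ref{thm:fixed:alg}, changing them only where the two models differ. In the hourly model a facility may be leased during an arbitrary subset of the time horizon, so I would replace the single opening variable $y_i$ per facility by a family $y_{i,t}\in[0,1]$, one per facility $i$ and time step $t$, keep the connection variables $x_{i,j,t}\in[0,1]$ together with the per-client switching variables, keep the constraints $\sum_i x_{i,j,t}=1$ and $x_{i,j,t}\leq y_{i,t}$, and change only the opening term of the objective from the one-shot cost $f\sum_i y_i$ to the leasing cost $f\sum_{i,t}y_{i,t}$. Any integral hourly solution is feasible for this LP, so $\LP\leq\OPT$, which already gives the first inequality of the theorem; it remains to round so that $\Pr\{cost\leq 4\log(2nT)\cdot\LP\}\geq 1/4$. (A naive reduction that turns each pair $(i,t)$ into its own facility of a fixed-cost instance does not work: consecutive time steps would then use formally distinct facilities and be charged a switching cost at every step, so I adapt the proof directly.)

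For the rounding I would follow the algorithm of Theorem~\ref{thm:fixed:alg} with a single change. Where that algorithm makes one independent random choice per facility $i$ --- leasing $i$ with probability $\min(1,4\log(2nT)\,y_i)$ --- the hourly algorithm draws one uniform variable $U_i\in[0,1]$ per facility and declares $i$ leased at time $t$ exactly when $U_i\leq 4\log(2nT)\,y_{i,t}$; these decisions are independent across facilities but correlated across time for a fixed facility, which is what will make the switching cost tractable (see below). The clients' trajectories over time are then extracted from the leased facility--time pairs by the same threshold-based procedure as in Theorem~\ref{thm:fixed:alg}. The three cost terms are controlled as there: the expected leasing cost is at most $4\log(2nT)$ times the fractional leasing cost by linearity of expectation; for the connection cost the filtering step is untouched (at each time step client $j$ carries at least half of its mass, hence --- since $x_{i,j,t}\leq y_{i,t}$ --- total $y$-mass at least $1/2$, on facilities within twice its fractional connection cost), and since for a fixed time step the leasing events of distinct facilities are independent, a Chernoff bound followed by a union bound over the $nT$ client--time pairs shows that, except with small probability, every client is at every step within distance $O(1)$ times its fractional connection cost of a leased facility; and the switching analysis of Theorem~\ref{thm:fixed:alg} bounds the expected number of facility changes of each client by $O(1)$ times its fractional switching cost, once the new forced-switch term below is added. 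Combining the three estimates with Markov's inequality and a union bound over a constant number of bad events produces the probability $1/4$, exactly as in Theorem~\ref{thm:fixed:alg}.

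The step I expect to be the real obstacle is the switching cost, and precisely the \emph{forced} switches that have no counterpart in the fixed-cost model: a facility serving client $j$ at time $t$ may simply not be leased at time $t+1$, pushing $j$ to another facility even though the fractional solution did not move it. This is why I would use the correlated rounding above: with a single uniform $U_i$ reused at all time steps, facility $i$ is leased during $\{t:U_i\leq 4\log(2nT)\,y_{i,t}\}$, a union of intervals whose expected number of close/reopen transitions is proportional to $4\log(2nT)$ times the total decrease of the fractional profile $y_{i,\cdot}$. Since $x_{i,j,t}\leq y_{i,t}$, a closing of $i$ is caused by a drop of $y_{i,\cdot}$, and one can charge the resulting forced move of $j$ to the fractional $x$-mass that the LP itself had to shed from $i$ between those two steps --- already paid for through the LP switching variables --- so that the expected forced-switch cost of the rounded solution is $O(\log(2nT))$ times the LP switching cost. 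This is the one place that genuinely goes beyond the proof of Theorem~\ref{thm:fixed:alg}; once it is in hand, the concluding probabilistic argument is word for word the one used there, and, exactly as for Theorem~\ref{thm:fixed:alg}, independent repetitions raise the success probability to $1-\epsilon$.
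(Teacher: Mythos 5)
Your setup---the LP with per-time variables $y_i^t$ (which gives $\LP\leq\OPT$ and hence the first inequality), plus a correlated rounding that draws one random threshold per facility and opens $i$ at exactly the times $t$ where $y_i^t$ exceeds it---matches the paper's Algorithm~2, which uses an exponential threshold $\rho_i$ with $\Pr\{\rho_i>a\}=e^{-2a\log(2nT)}$ and opens $i$ at time $t$ iff $y_i^t>\rho_i$. The gap is exactly where you predicted it would be: the forced switches. Your plan is to let them happen and charge each closing of facility $i$ to ``the fractional $x$-mass that the LP itself had to shed from $i$,'' i.e.\ to the variables $z_{ij}^t\geq x_{ij}^t-x_{ij}^{t+1}$. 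This charging fails. The LP couples $x_{ij}^t$ to $y_i^t$ only through $x_{ij}^t\leq y_i^t$, so $y_i^t$ can drop (closing the facility with probability proportional to that drop) while $x_{ij}^t$ for the client $j$ actually served stays constant, leaving $z_{ij}^t=0$ and nothing to charge; and a single closing can force every client currently served by $i$ to move at once, while the total $z$-mass shed at that step can be arbitrarily smaller than the number of forced moves. More bluntly, the expected number of closings of $i$ under your rounding is governed by $\sum_t(y_i^t-y_i^{t+1})^+$, a quantity the LP objective does not pay for at all (an oscillating profile makes it as large as $\sum_t y_i^t$ while the LP switching term stays zero), so no bound of the form $O(\log(2nT))$ times the LP switching cost can come out of this argument.

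The paper eliminates forced switches instead of charging them. For each client $j$ and each interval $I=[t^j_k,t^j_{k+1})$ of the greedy partition, it sets $x^I_{ij}=\min_{t\in I}x_{ij}^t$ and connects $j$ for the whole of $I$ to the facility minimizing $\rho_i/x^I_{ij}$. If that minimum is below $1$, then $\rho_i<x^I_{ij}\leq x_{ij}^t\leq y_i^t$ for every $t\in I$, so the chosen facility is open throughout $I$ and $j$ switches only at interval boundaries, which Fact~\ref{fac:sumz>1/2} already bounds by twice the LP switching cost---no new term appears. Because the $\rho_i$ are exponential, $\min_i(\rho_i/x^I_{ij})$ is again exponential, with rate $2\log(2nT)\sum_i x^I_{ij}\geq\log(2nT)$, so the bad event has probability at most $1/(2nT)$ per pair $(j,I)$ and a union bound gives success probability $1/2$; conditioned on success the argmin is distributed proportionally to $x^I_{ij}$, which is exactly what the distance analysis needs. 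This interval-wise selection rule, together with the exponential (rather than uniform) thresholds that make it analyzable, is the ingredient missing from your proposal.
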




This article concludes with several open questions and possible extension of this work.

\section{Facility Location in Evolving Metrics}
\label{sec:def}

\subsection{Definition}
\label{sec:def}

\paragraph{Dynamic Facility Location problem with fixed opening cost.} 
We are given a set $F$ of $m$ \emph{facilities} and a set $C$ of $n$ \emph{clients} together with a finite sequence of distances $(d_t)_{1\leq t \leq T}$ over $F\times C$, a non-negative \emph{facility opening cost} $f$ and a non-negative \emph{client switching cost} $g$. The goal is to output a subset $A\subseteq F$ of facilities and, for each time step $t\in[T]$, an assignment $\phi_t:C\rightarrow A$ of facilities to clients, so as to minimize:
\begin{equation*}
f\cdot \#A
+\sum_{1\leq t\leq T, j\in C} d_t(\phi_t (j),j)
+g\cdot\sum_{1\leq t< T}\sum_{j\in C}\bigUn\{\phi_t(j)\neq\phi_{t+1}(j)\},
\end{equation*}
 that is to say the sum of the {opening cost} ($f$ for each open facility), of the total \emph{distance cost} to connect each client to its assigned facility at every time step, and of the \emph{switching cost} for each client ($g$ per change of facility per client). 

\paragraph{Examples.}

\begin{figure}[t]
\center
\subfigure[The classroom: one teacher cycling between 5 groups of students.\label{fig:class}]{
\includegraphics[width=.9\textwidth]{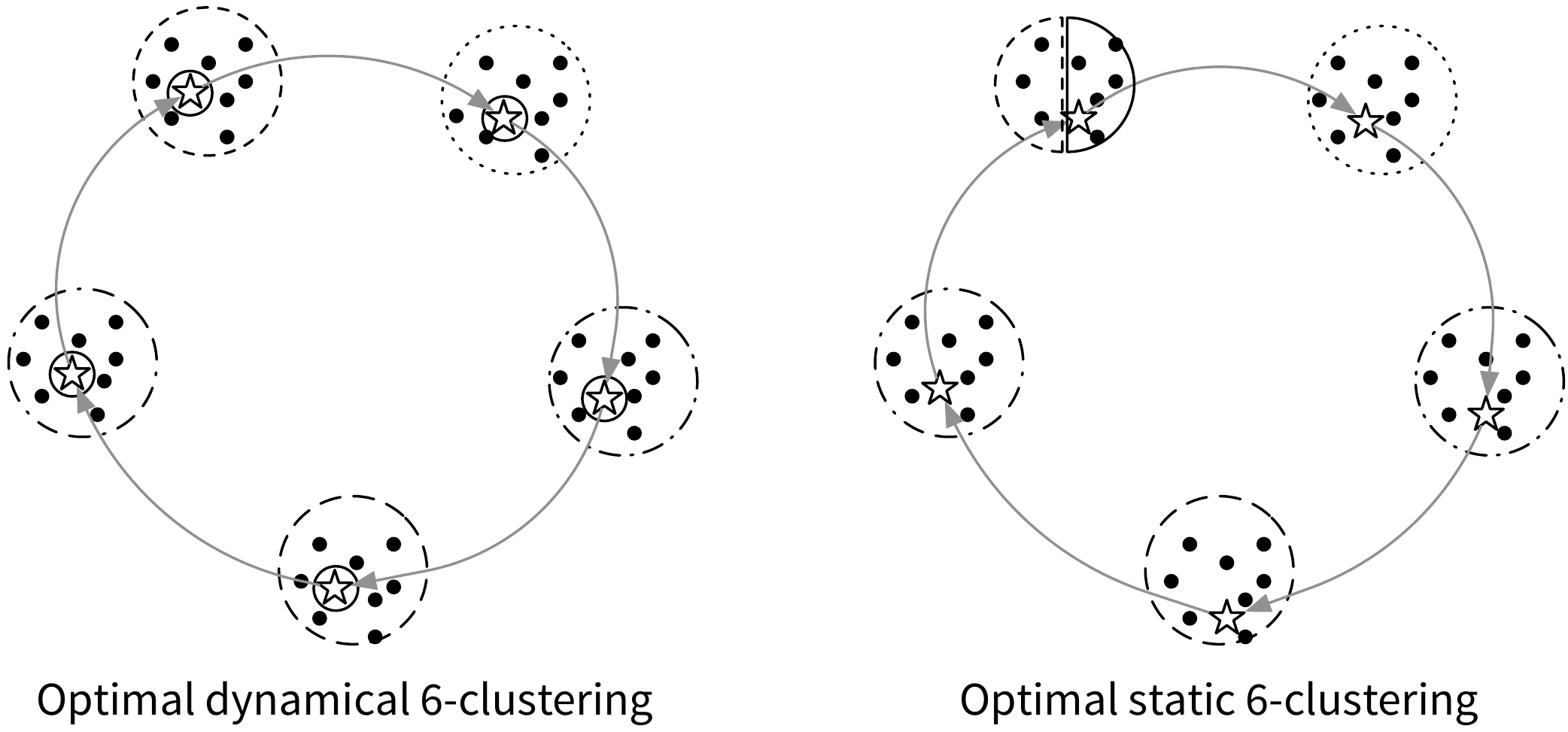}
}
\\[2mm]
\subfigure[Two groups crossing.\label{fig:cross}]{
\includegraphics[width=.9\textwidth]{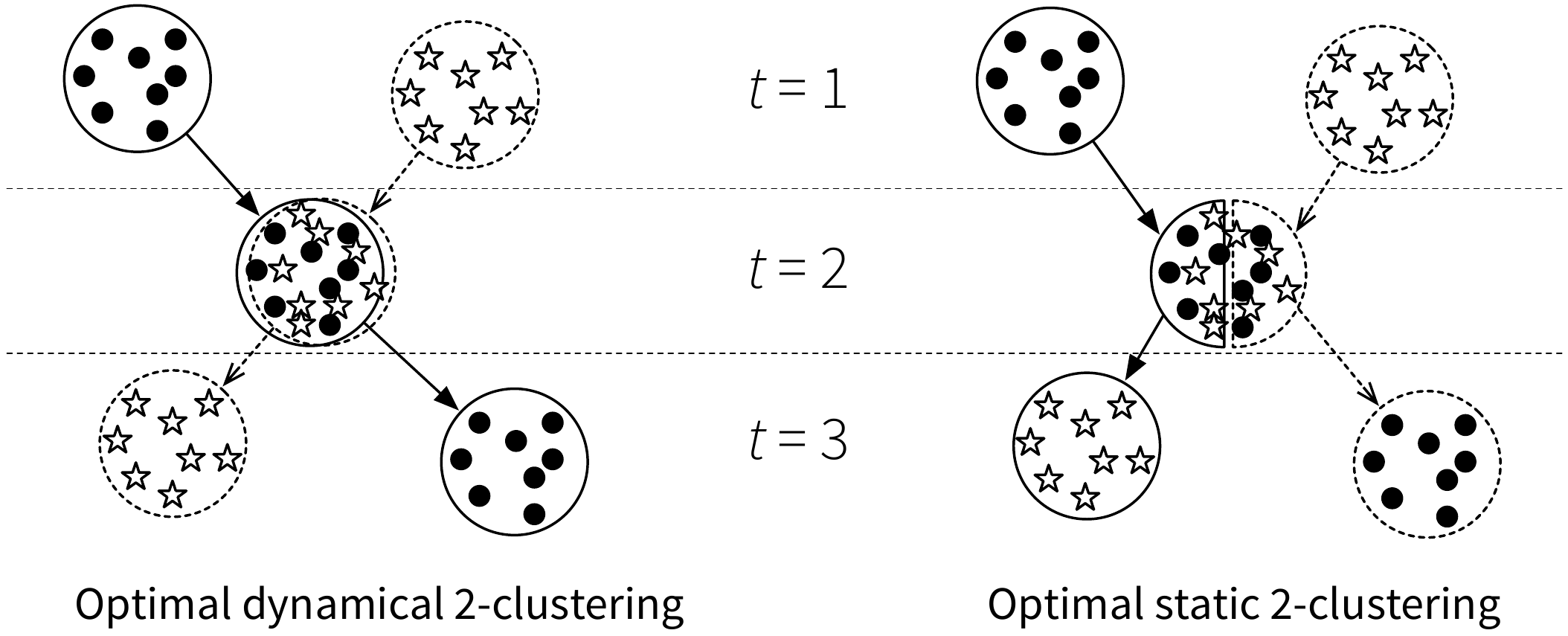}
}
\caption{Dynamic versus static Facility Location.}
\label{fig:statxdyna}
\end{figure}

The two examples in Figure~\ref{fig:statxdyna} show how facility location in the dynamic setting is quite different from the static setting and yields more desirable partitions of the clients. In both examples, a facility can be opened at every client  (so that electing a facility consists in electing a representative for every significantly different behavior). 

In example~\ref{fig:class}, we see a classroom with students split into five groups and a teacher moving from group to group in cyclic order.  When the number of students is large,   static facility location  isolates the five groups and moves the teacher from one group to the next between snapshots; whereas  dynamic facility location isolates every group of students and puts the teacher in a sixth group. 

In example~\ref{fig:cross} we see two groups of people crossing each other (on a street for instance): a static facility location would first output the two groups, then merge them into  a single group, then split it into two groups again; whereas a dynamic facility location would keep the same groups for the whole time period, with the same representatives. 

Assuming in both examples that the distances between  individuals are either very small or very large, then the ratio of the (dynamic) cost  between the dynamic solution and the sequence of static solutions can be made arbitrary large because the switching cost grows for the sequence of static solutions as $\Omega(T)$ and $\Omega(n)$ respectively. 

\begin{fac}
The ratio between the cost of an optimal dynamic facility location solution and the (dynamic) cost of a sequence of optimal static facility location solutions for each snapshot can be as large as $\Omega(T)$ and $\Omega(n)$.
\end{fac}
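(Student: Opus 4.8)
The plan is to turn the two pictures of Figure~\ref{fig:statxdyna} into explicit parametrised instances and, for each, to (i)~exhibit a cheap dynamic solution, (ii)~show it is essentially optimal, (iii)~show that \emph{every} sequence of snapshot-optimal static solutions is forced into a structure that pays a large switching cost, and (iv)~divide. Throughout I would fix three well-separated scales $0<\delta\ll f$, $g=f$, and $D$ large (larger than $6f$, say), where $\delta$ is a ``small'' distance and $D$ a ``large'' one; this separation is what pins the snapshot optima down. In both instances $F=C$, so a facility may be opened at any client's location.

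\textbf{$\Omega(T)$ from the classroom (Figure~\ref{fig:class}).} Place the five student groups at fixed points $p_1,\dots,p_5$ pairwise at distance $D$, with all of group $k$ sitting at $p_k$, and the teacher at a fixed point $p_0$ with $d_t(p_0,p_{k(t)})=\delta$ and $d_t(p_0,p_k)=D$ for $k\ne k(t)$, where $k(t)=((t-1)\bmod 5)+1$; keep $n$ and $D$ fixed and let $T$ grow. The dynamic solution that opens $p_0,\dots,p_5$ and keeps every client on the facility at its own location costs exactly $6f$, and I would argue this is optimal: a solution with at most five open facilities either leaves some $p_k$ closed --- then that group pays $\Omega(D)$ in connection cost, which exceeds $6f$ --- or has open set exactly $\{p_1,\dots,p_5\}$, in which case the teacher (paying $\geq D$ at any step otherwise) must track $p_{k(t)}$ and hence switch $\geq T-1$ times, for a total $\geq 5f+(T-1)g\geq 6f$. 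On the other side, once $\delta<f<D$ the only snapshot-optimal structure at time $t$ is to open $p_1,\dots,p_5$ and connect the teacher to $p_{k(t)}$, so since $k(t)\ne k(t+1)$ for every $t$ the teacher switches exactly $T-1$ times and the dynamic cost of any sequence of snapshot optima is $\geq 5f+(T-1)g=\Omega(Tf)$. Dividing gives ratio $\Omega(T)$.

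\textbf{$\Omega(n)$ from the crossing groups (Figure~\ref{fig:cross}).} Take two groups of $n/2$ clients and three time steps: at $t=1$ and $t=3$ the groups sit at two points at distance $D$, and at $t=2$ all $n$ clients sit at a single common point. The dynamic solution that opens one facility per group and keeps each client on its own group's representative forever costs $2f$ --- every connection distance is $0$, including at $t=2$ where the two representatives coincide --- and it is optimal because at $t=1$ a single open facility would pay $\geq (n/2)D>2f$. But the snapshot optimum at $t=2$ must open exactly one facility and connect all $n$ clients to it (two facilities cost $2f>f$), whereas the snapshot optimum at $t=1$ opens one facility per group; hence at the $t=1\to t=2$ transition at least $n/2$ clients change facility in \emph{every} sequence of snapshot optima, whose dynamic cost is therefore $\geq (n/2)g=\Omega(nf)$, a ratio of $\Omega(n)$.

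\textbf{Expected main obstacle.} The delicate point is not the arithmetic but the universal quantifier ``every sequence of optimal static solutions'': one must pin the snapshot optima down up to irrelevant choices. This is exactly what the scale separation $\delta\ll f=g\ll D$ and the collapsing of each group to one point buy: the only freedom left is which co-located client is nominated as a group's representative, a choice that changes no connection distance and can only raise the switching cost, so it cannot rescue the static sequence. (If metric distances are desired, both instances can be realised on a line or a star with the same estimates, but the Fact does not ask for this.)
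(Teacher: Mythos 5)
Your proposal is correct and follows exactly the paper's own route: the paper establishes this Fact only informally via the two examples of Figure~\ref{fig:statxdyna} (the cycling teacher forcing $\Omega(T)$ switches and the crossing groups forcing $\Omega(n)$ switches in any sequence of snapshot optima), and you have simply made those same two instances explicit with a scale separation that pins down the static optima. The extra care you take with the ``which co-located representative'' ambiguity is a valid tightening of the paper's sketch, not a departure from it.
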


\paragraph{A linear relaxation.}
For an integer programming formulation, we define indicator 0-1 variables $y_i$, $x_{ij}^t$, and $z_{ij}^t$ for $i\in F$, $j\in C$, and $t\in[T]$: $y_i= 1$ iff facility~$i$ is open; $x^t_{ij}=1$ iff client~$j$ is connected to facility~$i$ at time $t$; and $z^t_{ij} = 1$ iff client~$j$ is connected to facility~$i$ at time~$t$ but no more at time~$t+1$. The dynamic facility location problem is then equivalent to finding an integer solution to the following linear programming relaxation.
\begin{equation}
\MinProg{
	\displaystyle f\cdot \sum_{i\in F} y_i
	+ \sum_{j\in C}\sum_{1\leq t\leq T}\sum_{i\in F} x^t_{ij}\cdot d_t(i,j)
	 + g\cdot \sum_{j\in C} \sum_{1\leq t<T} \sum_{i\in F} z_{ij}^t
	 }{
&	\displaystyle(\forall ijt)~x_{ij}^t\leq y_i\\[2mm]
&	\displaystyle(\forall jt)~\sum_{i\in F} x_{ij}^t=1\\[5mm]
&	\displaystyle(\forall ij,~\forall t< T)~z_{ij}^t\geq x_{ij}^t-x_{ij}^{t+1}\\[2mm]
& \displaystyle y_i,x_{ij}^t,z_{ij}^t\geq 0
} \label{LP:1}
\end{equation}

\subsection{Approximation algorithm}
\label{sec:fixed:alg}


\begin{algorithm}[H]
\caption{Fixed opening cost}\label{alg:fixed:alg}
\begin{algorithmic}
\sffamily
\STATE $\bullet$ Solve the  linear program LP (\ref{LP:1}). Let $(x,y,z)$ be the solution obtained.
\STATE $\bullet$ Draw a facility at random  $\Gamma = 2\log(2nT)\sum_{i\in F} y_i$ times  independently, with distribution proportional to $y$; let $A$ be the resulting multiset of facilities.
\FOR{For each client $j$} 
	\STATE $\bullet$ Determine when it should change from one facility to another using the $z$-variables, and assign it to the cheapest selected facility between each change:
	\STATE ~~~~(a) Partition time greedily  into $\ell_j$ intervals $[t^j_k,t^j_{k+1})$ such that $t^j_1=1$ and where $t^j_{k+1}$ is inductively defined as the largest 
	$t\in(t^j_k,T+1]$ such that ${
\displaystyle\sum_{i\in F} \Bigl(\min_{t^j_k\leq u < t}  x_{ij}^u\Bigr) \geq 1/2}$, and $t^j_{\ell_j+1}= T+1$;
	\STATE~~~~ (b) For each time interval $[t^j_k,t^j_{k+1})$, connect $j$ to the facility in $A$ that is cheapest for $j$ for that time interval.
\ENDFOR
\end{algorithmic}
\end{algorithm}

%
%

Theorem~\ref{thm:fixed:alg}  (page~\pageref{thm:fixed:alg}) states that Algorithm~\ref{alg:fixed:alg} outputs a $O(\log nT)$-approximation with positive constant probability. The next section will show that this is asymptotically optimal (unless ${P\neq N\!P}$).

\begin{proof}[Proof of Theorem~\ref{thm:fixed:alg}]
The expected facility opening cost is obviously at most $2f\log(2nT)\sum_{i\in F} y_i$. In order to bound the switching and distance costs, let us now fix a client~$j$ and show the following fact: 

\begin{fac} \label{fac:sumz>1/2}
For all clients~$j$ and time intervals $[t^j_k,t^j_{k+1})$ with $k<\ell_j$: $$\displaystyle\sum_{t^j_k\leq t <t^j_{k+1}} \sum_{i\in F} z_{ij}^t > 1/2.$$
\end{fac}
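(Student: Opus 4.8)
The plan is to read off from the greedy stopping rule an upper bound on how ``spread out'' the fractional assignment of $j$ becomes across the interval, and to match it against a lower bound coming from the switching variables. Write $a=t^j_k$ and $b=t^j_{k+1}$, so the claim is $\sum_{a\leq t<b}\sum_{i\in F}z_{ij}^t>1/2$. First I would use the hypothesis $k<\ell_j$: then the interval $[a,b)$ is not the last one, so $b=t^j_{k+1}<t^j_{\ell_j+1}=T+1$, hence $b\leq T$ and $b+1$ is still a legal candidate in $(a,T+1]$ for the role of $t^j_{k+1}$. Since the algorithm chose $b$ as the \emph{largest} candidate satisfying $\sum_{i\in F}\bigl(\min_{a\leq u<t}x_{ij}^u\bigr)\geq 1/2$, that inequality must fail at $t=b+1$, i.e.
$$\sum_{i\in F}\,\min_{a\leq u\leq b}x_{ij}^u<\tfrac12 .$$

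Next, the heart of the argument is a telescoping bound: for every fixed facility $i$, the quantity $x_{ij}^u$ cannot decrease, as $u$ runs over $[a,b]$, by more than its total downward variation along the interval. Letting $u^\star$ attain $\min_{a\leq u\leq b}x_{ij}^u$ and telescoping,
$$\min_{a\leq u\leq b}x_{ij}^u=x_{ij}^a-\sum_{t=a}^{u^\star-1}\bigl(x_{ij}^t-x_{ij}^{t+1}\bigr)\;\geq\;x_{ij}^a-\sum_{t=a}^{b-1}\bigl(x_{ij}^t-x_{ij}^{t+1}\bigr)^{+}.$$
The two LP constraints $z_{ij}^t\geq x_{ij}^t-x_{ij}^{t+1}$ and $z_{ij}^t\geq 0$ give $z_{ij}^t\geq(x_{ij}^t-x_{ij}^{t+1})^{+}$, so the right-hand side is at least $x_{ij}^a-\sum_{t=a}^{b-1}z_{ij}^t$.

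Finally I would sum this over all $i\in F$ and invoke the assignment constraint $\sum_{i\in F}x_{ij}^a=1$, obtaining $\sum_{i\in F}\min_{a\leq u\leq b}x_{ij}^u\geq 1-\sum_{t=a}^{b-1}\sum_{i\in F}z_{ij}^t$. Combined with the first displayed inequality this yields $1-\sum_{t=a}^{b-1}\sum_{i\in F}z_{ij}^t<1/2$, which is exactly the claimed bound after rearranging. The only genuinely delicate point is the first step: one must use $k<\ell_j$ to justify that the stopping condition may legitimately be tested at time $b+1$, and keep straight that the relevant minimum runs over the \emph{closed} interval $[a,b]$, one step longer than the interval $[a,b)$ itself. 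The remaining steps are a single line of telescoping and one application of $\sum_{i\in F}x_{ij}^t=1$.
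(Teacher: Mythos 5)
Your proposal is correct and follows essentially the same route as the paper's proof: both extract $\sum_{i\in F}\min_{a\leq u\leq b}x_{ij}^u<1/2$ from the maximality of $t^j_{k+1}$ in the greedy rule, then telescope $x_{ij}^a-\min_u x_{ij}^u$ against the constraints $z_{ij}^t\geq x_{ij}^t-x_{ij}^{t+1}$ and $z_{ij}^t\geq0$, and finish with $\sum_{i\in F}x_{ij}^a=1$. The only cosmetic difference is that you bound the drop by $\sum_t(x_{ij}^t-x_{ij}^{t+1})^+$ over the whole interval, while the paper telescopes only up to the minimizer $t_i$ and pads with $z_{ij}^t\geq0$; these are the same argument.
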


This fact yields an easy bound on the switching cost: the switching cost for client~$j$ is $g$ times the number of its intervals minus~$1$. But according to the fact above, for every interval except the last one, the $z_{ij}^t$'s sum to at least $1/2$, so LP~(\ref{LP:1}) pays at least $g/2$ for that interval. The switching cost of the solution is then at most twice the corresponding term in the LP.

\begin{proof}[Proof of Fact~\ref{fac:sumz>1/2}]
For all $t$, $\sum_{i\in F} x_{ij}^t = 1$, in particular at time $t^j_k$. Now, since $k < \ell_j$, we have ${\sum_{i\in F} \Bigl(\min_{t^j_k\leq t\leq t^j_{k+1}}  x_{ij}^{t} \Bigr) < 1/2}$. Let $t_i\in[t^j_k,t^j_{k+1}]$ such that $x_{ij}^{t_i} = \min_{t^j_k\leq t\leq t^j_{k+1}} x_{ij}^t$. We have ${\sum_{i\in F} x_{ij}^{t_i} <1/2}$.  Now, as $z_{ij}^t\geq 0$, ${\sum_{t^j_k\leq t < t^j_{k+1}} z_{ij}^t \geq \sum_{t^j_k\leq t < t_i} z_{ij}^t \geq \sum_{t^j_k\leq t < t_i} (x_{ij}^t-x_{ij}^{t+1}) = x_{ij}^{t^j_k}-x_{ij}^{t_i}}$.  It follows that ${\sum_{t^j_k \leq t <t^j_{k+1}} \sum_{i\in F} z_{ij}^t \geq {\sum_{i\in F} x_{ij}^{t^j_k} - \sum_{i\in F} x_{ij}^{t_i}} > 1-1/2=1/2}$.
\end{proof}

Let us now bound the expected distance cost for client~$j$ within each interval $I=[t^j_k, t^j_{k+1})$. 
Let $x^I_{ij} = \min_{t\in I} x_{ij}^t$ and $\hat{x}^I_{ij}=x^I_{ij}/\sum_{i\in A} x^I_{ij}$.
We want to argue that the facility selection process (which is according to the $y_i$'s) can be simulated, to within a factor of 2, by selecting a facility according to $\hat x^I_{ij}$.  Then the expected distance is correct up to a factor of 2.

We know that $x_{ij}^t\leq y_i$. We can view sampling proportionally to $(y_i)$ as: with probability
$p^I_j=\frac{\sum_i x^I_{ij}}{\sum_i y_i}$, sample proportionally to $x^I_{ij}$, and with the remaining probability, sample proportionally to $y_i-x^I_{ij}$. Indeed,
\begin{align*}
\Pr\{\text{$i$ is selected by this process}\} 
&	= \frac{\sum_i x^I_{ij}}{\sum_i y_i} \cdot \frac{x^I_{ij}}{\sum_i x^I_{ij}} + \left(1-\frac{\sum_i x^I_{ij}}{\sum_i y_i}\right) \cdot \frac{y_i-x^I_{ij}}{\sum_i (y_i-x^I_{ij})} \\
&	= \frac{x^I_{ij}}{\sum_i y_i} + \frac{\sum_i y_i -\sum_i x^I_{ij}}{\sum_i y_i} \cdot \frac{y_i-x^I_{ij}}{\sum_i y_i - \sum_i x^I_{ij}} \\
&	= \frac{y_i}{\sum_i y_i}.
\end{align*}
Formally, we consider the following facility selection process: let $U$ be a uniform real number in $[0,\sum_i y_i)$, we say that facility $i$ is selected if $U\in[\sum_{k<i} y_k, \sum_{k\leq i} y_k)$ and that event $B^I_j$ occurs if $U\in[\sum_{k<i} y_k, \sum_{k< i} y_k+x^I_{ij})$. As pointed out before, according to this process: 1) $i$ is distributed proportionally to $y_i$; 2)~$\Pr B^I_j = p^I_j = \frac{\sum_i x^I_{ij}}{\sum_i y_i} \geq \frac{1}{2\sum_iy_i}$; and 3)~conditioned to event $B^I_j$, $i$ is distributed proportionally to $x^I_{ij}$. 

We repeat the selection process $2\log(2nT)\sum_i y_i$ times independently. Given a pair $(j,I)$, the probability that event $B^I_j$ never occurs is at most $(1-p^I_j)^{2\log(nT)\sum_i y_i} \leq \exp\bigl(-\frac{2\log(2nT)\sum_i y_i}{2\sum_iy_i}\bigr) = \frac{1}{2nT}$. 
Since there are at most $nT$ pairs $(j,I)$, the union bounds ensures that with probability at least $\frac12$, all the events $B^I_j$ occur at least once during the selection.


When $B^I_j$ occurs, the facility $i$ is selected according to $x^I_{ij}$. It follows that the expected distance of this selected facility to $j$ is  for all time $t\in I$:
$\displaystyle \sum_i \frac{x^I_{ij}}{\sum_i x^I_{ij}} \cdot d_t(i,j) \leq \frac{1}{1/2} \sum_i x^t_{ij} d_t(j,i)$.
%
%
It follows that with probability at least $\frac12$, the expectation of the sum of the distances of all $j$'s at all time $t$ to their closest selected facility in $F$ is at most:
$2 \sum_{j,t} \sum_i x^t_{ij} d_t(j,i)$.
%
Summing all the contribution, with probability at least $\frac12$, the expected cost of the solution is at most:
$$
2f\log(2nT)\sum_i y_i
	+ 2  \sum_{i,j,t}x^t_{ij} d_t(j,i)
	+ 2g\sum_{i,j,t} z^t_{ij}
\leq 2\log(2nT) \cdot LP.
$$

Applying Markov inequality we get:
\begin{align*}
\Pr\{cost \leq 4&\log(2nT) LP\}\\ 
&	\geq  \Pr\{cost\leq 2\cdot 2\log(2nT) LP \text{ and all $B_j^I$ occur}\} \\
&	= \Pr\{cost\leq 2\cdot 2\log(2nT) LP ~|~ \text{all $B_j^I$ occur}\} \cdot \Pr\{\text{all $B_j^I$ occur}\}\\
& \geq \frac{1}{2}\cdot\frac12. \quad\text{(Markov)} 
\end{align*}
%
\end{proof}

\subsection{Hardness of approximation}
\label{sec:fixed:hard}

\begin{proof}[Proof of Theorem~\ref{thm:fixed:hard}]
We do a reduction from Set Cover. 

Pick an instance of set cover with $T$ elements and $m$ sets. We define the following instance of dynamic facility location. There is one timestep $t$ for each element of the set cover instance, one facility $i$ for each set of the set cover instance, and a single client.
We set $g=0$ (i.e., $g$ is small enough w.r.t. $f$, $1/n$ and $1/T$). Assume the only possible locations for the client and facilities are two points $a$ and $b$ at distance $\infty$ (i.e. large enough) from each other (note that it satisfies the triangle inequality).  At every time step, the client  sits at location $a$. For each set $i$ of the set cover instance, the corresponding facility is in position $a$ if set~$i$ contains element~$t$, and in position $b$ otherwise. This defines the instance of dynamic facility location.

Since the distance is infinite between the two locations, the output $A$, to have  finite cost, has to correspond to a cover of the unique client for all  $T$ time steps, i.e. a cover of all $T$ elements in the set cover input. The cost is then simply $f$ times the number of selected facilities. The $\Omega(\ln T)$ hardness lower bound for Set Cover with $T$ elements in \cite{DinurSteurer2014STOC} implies that same lower bound for our problem.
\end{proof}

\section{Hourly opening cost}
\label{sec:hourly}

\subsection{Dynamic Facility Location with hourly opening cost}

We now focus on a variant of the problem studied in the previous section where the facilities can be open and closed at any time step and where the opening cost $f$ is paid for every facility open at every time step. 

\paragraph{Dynamic Facility Location problem with \emph{hourly} opening cost.} 
We are given a set $F$ of $m$ facilities and a set $C$ of $n$ clients together with a finite sequence of distances $(d_t)_{1\leq t \leq T}$ over $F\times C$, and two non-negative values $f$ and $g$. The goal is to output a sequence of subsets $A_t\subseteq F$ of facilities and, for each time step $t\in[T]$ an assignment, $\phi_t:C\rightarrow A_t$ of facilities to clients, so as to minimize:
\begin{equation*}
f\cdot \sum_{1\leq t\leq T}\#A_t
+\sum_{1\leq t\leq T, j\in C} d_t(\phi_t (j),j)
+g\cdot\sum_{1\leq t< T}\sum_{j\in C}\bigUn\{\phi_t(j)\neq\phi_{t+1}(j)\}.
\end{equation*}
 
\paragraph{Linear relaxation.}
The LP~(\ref{LP:1}) readily extends, with variables $y_i^t$:
\begin{equation}
\MinProg{     \displaystyle f\sum_{1\leq t\leq T} \sum_{i\in A} y^t_i 
			+ \sum_{j\in C}\sum_{1\leq t\leq T}\sum_{i\in F} x^t_{ij}\cdot d_t(i,j)
			+ g \sum_{j\in C} \sum_{1\leq t<T} \sum_{i\in F} z_{ij}^t 
}{&	\displaystyle(\forall ijt)~ x_{ij}^t\leq y^t_i\\[2mm]
&	\displaystyle(\forall jt)~ \sum_{i\in F} x_{ij}^t=1\\[5mm]
&	\displaystyle(\forall ij,~\forall t<T)~ z_{ij}^t\geq x_{ij}^t-x_{ij}^{t+1}\\[2mm]
& 	y^t_i,x_{ij}^t,z_{ij}^t\geq 0
}
\label{LP:2}
\end{equation}

\subsection{$O(\log nT)$-Approximation algorithm}

We now change the sampling procedure for the facilities: every facility~$i$ selects an exponentially distributed random threshold and opens only when its $y^t_i$ variable is above the threshold. 

\begin{algorithm}[H]
\caption{Hourly opening cost}\label{alg:hourly:alg}
\begin{algorithmic}
\sffamily
\STATE $\bullet$ Solve the  linear program LP (\ref{LP:2}). Let $(x,y,z)$ be the solution obtained.
\STATE $\bullet$ For each facility~$i$, pick a random threshold $\rho_i$ according to an exponential distribution with expectation $1/(2\log(2nT))$: i.e. $\Pr\{\rho_i>a\} = e^{-2a\log(2nT)}$ for all $a\geq0$. Open facility~$i$ at all times~$t$ such that $y_i^t>\rho_i$. 
For each time $t$, let $A_t$ be the resulting multiset of facilities open at  $t$.
\FOR{each client $j$}
	\STATE $\bullet$ Determine when it should change from one facility to another using the $z$-variables, and assign it to the cheapest selected facility between each change:
	\STATE ~~~~(a) As before, partition time greedily  into $\ell_j$ intervals $[t^j_k,t^j_{k+1})$ s.t. $t^j_1=1$ and where $t^j_{k+1}$ is inductively defined as the largest 
	$t\in(t^j_k,T+1]$ with ${
\displaystyle\sum_{i\in F} \Bigl(\min_{t^j_k\leq u < t}  x_{ij}^u\Bigr) \geq 1/2}$, and $t^j_{\ell_j+1}= T+1$;
	\STATE ~~~~(b) For each time interval $I=[t^j_k,t^j_{k+1})$ and facility $i$,  let ${x^I_{ij} = \min_{t\in I} x_{ij}^t}$ and connect client~$j$ to the facility~$i\in A_t$ that minimizes the ratio $\rho_i/x^I_{ij}$.
\ENDFOR
\end{algorithmic}
\end{algorithm}

The idea is that client~$j$ selects a facility~$i$ whose opening threshold is below its $x^I_{ij}$-value. We will show that one can find such a facility for all clients at all time steps with high probability.


Let us now analyze the cost of the resulting solution (Theorem~\ref{thm:hourly:alg}).

\begin{lem}
The expected opening cost is at most $2\log (2nT)$ times the corresponding term in LP~(\ref{LP:2}).
\end{lem}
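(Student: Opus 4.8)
The plan is to bound the expected opening cost directly, pair by pair over facilities and time steps, using only linearity of expectation and the exponential CDF. First I would note that in Algorithm~\ref{alg:hourly:alg} facility~$i$ belongs to $A_t$ precisely when $y_i^t>\rho_i$ (each facility appears at most once, so $\#A_t=\sum_{i\in F}\bigUn\{y_i^t>\rho_i\}$), and hence the opening cost of the output is $f\sum_{1\leq t\leq T}\sum_{i\in F}\bigUn\{y_i^t>\rho_i\}$, whose expectation, by linearity, equals $f\sum_{t}\sum_i\Pr\{\rho_i<y_i^t\}$.

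Next I would evaluate $\Pr\{\rho_i<y_i^t\}$. By the choice of the threshold distribution, $\Pr\{\rho_i>a\}=e^{-2a\log(2nT)}$ for $a\geq0$, so $\Pr\{\rho_i<y_i^t\}=1-e^{-2y_i^t\log(2nT)}$. The only inequality needed is the elementary concavity bound $1-e^{-x}\leq x$ for all $x\geq0$, which gives $\Pr\{\rho_i<y_i^t\}\leq 2\log(2nT)\,y_i^t$. Substituting, the expected opening cost is at most $f\sum_t\sum_i 2\log(2nT)\,y_i^t=2\log(2nT)\cdot\bigl(f\sum_t\sum_i y_i^t\bigr)$, which is exactly $2\log(2nT)$ times the opening-cost term of LP~(\ref{LP:2}), proving the lemma.

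There is essentially no obstacle in this particular statement: it is a one-line computation resting on $1-e^{-x}\leq x$. The genuine difficulty in the analysis of Algorithm~\ref{alg:hourly:alg} lies elsewhere, in the companion arguments that still need to be carried out — namely bounding the switching cost via Fact~\ref{fac:sumz>1/2} as in the fixed-cost case, and, more delicately, showing that with constant probability every client~$j$ can, on each of its intervals $I=[t^j_k,t^j_{k+1})$, find an open facility~$i\in A_t$ with $\rho_i<x^I_{ij}$, so that the distance cost incurred by choosing the facility minimizing $\rho_i/x^I_{ij}$ is within a constant factor of the LP distance term. It is that last step, rather than the opening-cost bound, where the exponential-threshold construction is actually exploited.
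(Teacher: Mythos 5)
Your proof is correct and follows essentially the same route as the paper's: linearity of expectation over all pairs $(i,t)$, the exponential tail $\Pr\{\rho_i>a\}=e^{-2a\log(2nT)}$, and the bound $1-e^{-x}\leq x$ to get $\Pr\{\rho_i<y_i^t\}\leq 2y_i^t\log(2nT)$. If anything, your write-up is slightly more careful than the paper's, which drops the factor $2$ in its intermediate inequality (a typo) before restoring it in the final bound.
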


\begin{proof}
Facility~$i$ is open at time~$t$ with probability $\Pr\{\rho_i \leq y^t_i\} = 1-e^{-2y^t_i\log(2nT)}\leq y^t_i\log(2nT)$ since $e^a \geq 1+a$ for all $a\in\mathbb R$. The expected facility cost is thus at most $f\sum_i\sum_t y^t_i \cdot 2\log(2nT)$.
\end{proof}

As before, Fact~\ref{fac:sumz>1/2}  holds here as well and the total switching cost is at most twice the value of  the corresponding in LP~(\ref{LP:2}). 

\begin{proof}[Proof of Theorem~\ref{thm:hourly:alg}]
We are now left with evaluating the distance cost. We want to show that we can view things so that there is a facility sampled according to $x^I_{ij}$ that is alive throughout the time interval $I$. Use the same $\rho_i$'s, but imagine that you only open facility~$i$ if $x^I_{ij}>\rho_i$. Since we assign client~$j$ to the facility~$i$ such that $\rho_i/x^I_{ij}$ is minimum, we do get a facility in that way, as long as $\rho_i/x^I_{ij} <1$. Note that $\rho_i/x^I_{ij}$ is an exponential of rate $2x^I_{ij}\log(2nT)$, and by independence of the $\rho_i$'s, $\min_i (\rho_i/x^I_{ij})$ is also an exponential of rate $2\sum_i x^I_{ij}\log(2nT)$, indeed:
$$
\Pr\{\min_i(\rho_i/x^I_{ij}) > a\} 
	= \prod_i \Pr\{\rho_i > a\cdot x^I_{ij}\} 
	= e^{-2a \cdot(\sum_i x^I_{ij}\log(2nT))}.
$$
Then, the probability that client~$i$ is not covered by an open facility by this process during time interval $I$ is $\Pr\{\min_i(\rho_i/x^I_{ij}) \geq 1\}  = e^{-2(\sum_i x^I_{ij})\log(2nT)}\leq \frac{1}{2nT}$ since $\sum_i x^I_{ij} \geq 1/2$. Consider now the event $B$ that for all client~$j$ and all interval $I$, $\min_i  (\rho_i/x^I_{ij}) < 1$, then $\Pr B \geq \frac12$ by the union bound. Now, conditioned to event $B$, the expected distance between every client $i$ to the facility it is assigned during each interval $I$  is upper bounded by: 
\begin{align*}
\sum_i \frac{\Pr\{\rho_i\leq x^I_{ij}\}}{\Pr B} d_t(i,j) 
&	\leq \sum_i 2\left(1-e^{-2x^I_{ij}\log(2nT)}\right) d_t(i,j)\\
&	\leq \sum_i 4x^I_{ij}\log(2nT) d_t(i,j)
\end{align*}
The expected distance cost conditioned to event $B$ is thus at most $4\log(2nT)$ times the corresponding term in LP~(\ref{LP:2}).

As $\Pr B \geq \frac12$, the expected facility cost conditioned to event $B$ is at most twice the unconditioned facility cost. The expected overall cost conditioned to event $B$ is then at most:
$$
4\log(2nT) f \sum_{i,t} y_i^t + 2 g \sum_{i,j,t} z_{ij}^t + 4\log(2nT) \sum_{i,j,t} x^I_{ij} d_t(i,j) \leq 4\log(2nT) \LP
$$ 
We conclude by applying Markov inequality as before.
\end{proof}

\section{Conclusion and open questions}

Algorithm~\ref{alg:fixed:alg} applies even if the distance do not follow the triangle inequality, and extends directly to  non-uniform opening cost as well as to arrival and departures dates for clients. It is striking that  instances with distances verifying the triangle inequality are not easier in the dynamic setting as opposed to the classic static setting (the approximation ratio $\Theta(\log nT)$ of Algorithm~\ref{alg:fixed:alg} is tight in both dynamic cases).  Algorithm~\ref{alg:hourly:alg} extends also directly to the setting of opening costs which are non-uniform in time as well. The last section raises naturally the question whether there is an $\omega(1)$-hardness result\,/\,$O(1)$-approximation algorithm for the general hourly opening cost case. We believe that our dynamic setting should be helpful in designing better \emph{static} representations of dynamical graphs (such as two dimensional flowcharts of the clients navigating between the different facilities over time). An other natural extension of our work is to study other objective functions for the distance cost, such as the sum of the diameters of resulting clusters at all time (i.e. the sum of the distance of the farthest client attached to each facility at all time, see e.g. \cite{CharikarPanigrahy2001} for a static formulation). As it turns out, the optimal dynamic solutions tend to adopt very intriguing behaviors under this objective, even in the simplest case of client moving along a fixed line, as has been observed in \cite{FernandesOshiroSchabanel2013Algotel}.

\bibliographystyle{plain}
\bibliography{dyn-clus}

\end{document}